\newtheorem{theorem}{Theorem}
\newtheorem{assumption}{Assumption}
\newtheorem{corollary}{Corollary}
\newtheorem{example}{Example}
\newcommand{\bbeta}{{\boldsymbol{\beta}}}
\newcommand{\btheta}{\boldsymbol{\theta}}
\newcommand{\spaceX}{{\Xi}}
\begin{document}
\title{ KL-optimum designs: theoretical properties\\ and practical computation\thanks{The final publication is available at Springer via http://dx.doi.org/10.1007/s11222-014-9515-8}}
\author{Giacomo Aletti \and Caterina May \and Chiara Tommasi%
}
\date{August 22, 2014}
\maketitle

\begin{abstract}
In this paper some new properties and computational tools for finding KL-optimum designs are provided.
KL-optimality is a  general criterion useful to
select the best experimental conditions to discriminate between statistical models.
A KL-optimum design is obtained from a minimax optimization problem, which is defined on a infinite-dimensional space.
In particular, continuity of the KL-optimality criterion is proved under mild conditions; as a consequence, the first-order algorithm converges to the set of KL-optimum designs for a large class of models.  It is also shown that KL-optimum designs are invariant to any scale-position transformation. Some examples are given and discussed, together with some practical implications for numerical computation purposes.
\end{abstract}

\section{Introduction}
In  presence of families of competing models,
one of the main tasks of the optimum experimental design is the choice of the best experimental conditions to discriminate between rival models.

This problem was studied by many researchers in the field. See for instance, \cite{AtkinsonCox1974}, \cite{Dette1994,Dette} and \cite{Hill1978}, among many others.
In \cite{AF1975a,AF1975b}, the authors provided the T-optimality criterion
to select between two, or more, competing homoscedastic Gaussian models. This criterion was extended in \cite{UB2005} to
heteroscedastic Gaussian  models.
More recently, the KL-optimality, which is based on the Kullback-Leibler divergence, has been proposed in \cite{Tomm2007}.
The idea of using the Kullback-Leibler divergence to construct an optimality criterion is not new. For instance,
 \cite{BoxHill1967} proposed a Bayesian sequential method to discriminate among several  models; at each stage of their sequential scheme, the experimental conditions were chosen by maximizing
a weighted sum of Kullback-Leibler divergences between any couple of models. In this way,
 the maximum change in entropy expected from the observations to be taken at that stage was maximized.
See also \cite{MeyerSteinbergBox1996}, who applied the Box and Hill criterion to the problem of augmenting a multifactor design. Finally, we refer the reader to \cite{Wiens2009} for another optimality criterion based on the Kullback-Leibler divergence.

KL-optimum designs proposed by \cite{Tomm2007} are obtained from a minimax optimization problem. The KL-criterion is very general since it can be applied when the rival models are nested or not, homoscedastic or heteroscedastic and with any error distribution; moreover, it may be seen as an extension of the T-criterion and its generalizations. It was also applied to discriminate among several models in \cite{TommasiModa8} and used in compound criteria for the double goal of discrimination and estimation in \cite{Tomm2009,TommasiMay}.
T- and KL-optimum designs are usually computed numerically. Differently, \cite{DetteMelasShpilev} provide analytical results for T-optimal designs when the interest is in discriminating between two polynomial models which differ in the degree of two.

In this paper, some interesting properties of KL-optimum designs are proved. The notation and framework of the problem are given in Section \ref{sect:frame}.
In Section \ref{sect:invcont} it is shown that KL-optimum designs are invariant to a scale-position transformation. Moreover, it is proved the continuity of the KL-optimality criterion with respect to the experimental design; mild conditions are assumed.
This result is very important also for the numerical computation of a KL-optimum design.
In fact, the continuity property is crucial to prove that the first-order algorithm, recalled in Section \ref{sect:alg}, converges to the set of KL-optimum designs, whenever it moves on regular designs (for a detailed proof see \cite{AMT_Moda10}). In Section \ref{sect:GLM} it is shown that, at least for the case of Generalized Linear Models, any design with non-singular Fisher information matrix is KL-regular and viceversa. As a consequence, the algorithm always converges at least in this class of models. In Section \ref{sect:ex} some examples are provided and discussed. In addition, some computational hints for the application of the algorithm  are described. A final discussion concludes the paper.

\section{Framework and notations}\label{sect:frame}
Let $f_1(y|{\bold x};{\bbeta}_1)$ and $f_{2}(y|{\bold x};{\bbeta}_{2})$ be two rival statistical models, i.e. two parametric families of conditional probability densities of an experimental response $Y$ under the experimental condition ${\bold x}$, where
${\bold x}$ belongs to a compact experimental
domain ${\cal X} \subset \mathbb{R}^q$, $q \ge 1$, the parameters
 ${\bbeta}_i\in\Theta_i$, $i=1,2$
and  $\Theta_i$ is an open set of $\mathbb{R}^{d_i}$, $i=1,2$.\footnote{Note that
the parameter spaces $\Theta_1$ and $\Theta_2$ can be required instead to be compact.}

A design $\xi$ is a probability distribution having support on ${\cal X}$.
In a discrimination problem,
the choice of $\xi$ should be done in order to maximize the ``separation'' between the competing models.

The KL-optimality criterion is based on the Kullback-Leibler divergence between the two conditional distributions
$f_1(y|{\bold x};{\bbeta}_1)$ and $f_{2}(y|{\bold x};{\bbeta}_{2})$:
\begin{equation}
{\cal I}({\bold x},{\bbeta}_1,{\bbeta}_{2})=\int_{\cal Y}  \log
\frac{f_1(y|{\bold x};{\bbeta}_1)}{f_{2}(y|{\bold x};{\bbeta}_{2})}\, f_1(y|{\bold x};{\bbeta}_1)\; dy.
\label{I-quantity2}
\end{equation}
Note that the statistical models $f_1(y|{\bold x};{\bbeta}_1)$ and $f_{2}(y|{\bold x};{\bbeta}_{2})$ may be conditional densities with respect to a common general measure $\mu$; to include discrete models it is enough to replace $dy$  with $\mu(dy)$ in Equation~\eqref{I-quantity2} and in the rest of the paper.
The notation used in \eqref{I-quantity2} is maintained for simplicity. 

The quantity in Equation~\eqref{I-quantity2} is known to be non-negative, and it is zero if
and only if the two responses are equal almost surely. The Kullback-Leibler divergence is
often called distance, although it is not symmetric and does not satisfy the triangular
inequality. In this context,
the Kullback-Leibler divergence in Equation~\eqref{I-quantity2} measures the dissimilarity
between the two different distributions with parameters $\bbeta_1$ and $\bbeta_2$,
when the experimental condition is ${\bold x}$.

If a design $\xi$ is chosen to maximize the power function in the worst case in a hypothesis test where $f_1(y|{\bold x};{\bbeta}_1)$
is the true completely known model  under the alternative $f_2(y|{\bold x};{\bbeta}_2)$, then this design is the maximum
of the KL--optimality criterion proposed
in \cite{Tomm2007}:
\begin{equation}
I_{2,1}(\xi; {\bbeta}_1)= \inf_{{\bbeta}_{2}\in \Theta_{2}}  \int_{\cal X}
{\cal I}({\bold x},{\bbeta}_1,{\bbeta}_{2}) \,\xi(d{\bold x}).
\label{kl1}
\end{equation}
The criterion \eqref{kl1} is the minumum Kullback-Leibler distance between the joint distribution
$f_1(y|{\bold x};{\bbeta}_1)\xi({\bold x})$ and the joint statistical model $f_{2}(y|{\bold x};{\bbeta}_{2})\xi({\bold x})$.

For a given value ${\bbeta}_1 \in \Theta_{1}$ of the first model, a design $\xi$ is {\it KL-regular} if the following set
\begin{equation}\label{eq:regular}
\Omega_{2}(\xi;\bbeta_1)=\Big\{{\hat\bbeta}_{2}:\;{\hat{\bbeta}}_{2}(\xi)=\arg
\inf_{{\bbeta}_{2} \in \Theta_{2}} \int_{\cal X} {\cal I}({\bold x},{\bbeta}_1,{\bbeta}_{2})
\,\xi(d{\bold x})
\Big\}
\end{equation}
is a singleton. Otherwise $\xi$ is called {\it KL-singular}.

A KL-optimum design
\begin{equation}
\xi^* \in \arg\max_\xi I_{2,1}(\xi; {\bbeta}_1)
\label{eq:KL}
\end{equation}
 exists since the KL-criterion function
 \begin{equation}\label{eq:crit}
I_{2,1}(\xi; {\bbeta}_1): ({\spaceX},d_w)\rightarrow [0,+\infty)
\end{equation}
is concave (as proved in \cite{Tomm2007}) and upper semi-continuous (as proved in \cite{TommasiMay}), where
 ${\spaceX}$ is the set of probability distributions
$\xi$ with support ${\cal X }\subset \mathbb{R}^q$ endowed with a metric $d_w$
which metrizes the weak convergence on $\cal X$.
\noindent Since $\cal X$ is compact, the metric space $({\spaceX},d_w)$, which is an infinite-dimensional space, is complete and compact. In fact, any sequence of probability distribution on $\cal X$ is tight, and hence, by Prokhorov's Theorem,
it admits a converging subsequence (in $({\spaceX},d_w)$). In what follows, we take the Kantorovich-Wasserstein metric  (see \cite{Gibbs02}):
\begin{equation*}
d_w(\xi_1,\xi_2)=\inf \{ E(|X_1-X_2|):X_1\sim \xi_1, X_2\sim \xi_2\}.
\end{equation*}
Since ${\bbeta}_1$ is assumed to be known, for sake of simplicy, $I_{2,1}(\xi; {\bbeta}_1)$ will be denoted by $I_{2,1}(\xi)$.

\section{Invariance property and continuity of the KL-criterion}\label{sect:invcont}

In this section two important theoretical properties are presented. First, we prove that  KL-optimum designs are invariant to a
scale-position transformation of the design region, whenever this transformation on the experimental condition
results in a new parametrization of the rival models. Then, we prove the continuity of the KL-optimality criterion \eqref{eq:crit},
with respect to the design $\xi$.\\
\begin{theorem}\label{theo:inv}
Let
$$
{\cal Z}=\left\{ {\bold z}={\bold a}+{\bold B} {\bold x}|{\bold x} \in {\cal X} \right\}
$$
be a rescaled experimental domain, where  ${\bold a}$  and  ${\bold B}$ are, respectively, a  column vector and a non singular matrix of known constants.\\
If
\begin{enumerate}
\item \label{hp:1}
$f_i(y|{\bold x};{\bbeta}_i)=f_i(y|{\bold z};{\bold \gamma}_i)$, where ${\bold \gamma}_i=g_i({\bbeta}_i)$ and $g_i(\cdot)$ is a one to one function,
and $i=1,2$;
\item \label{hp:2}
the following mean log-likelihood
\begin{equation}\label{eq:ml}
{{\cal ML}}_2(\bbeta_2)= \int_{\cal X}\int_{\cal Y} \log f_2(y|{\bold x};{\bbeta}_2) \, f_1(y|{\bold x};{\bbeta}_1)\,dy\,\xi(d{\bold x})
\end{equation}
has at least a maximizer
$\hat{\bbeta}_2 \in \mathop{\arg\sup}\limits_{{\bbeta}_2} {{\cal ML}}_2(\bbeta_2)$ for any fixed design $\xi\in{\spaceX}$;
\end{enumerate}
then a KL-optimum design on  ${\cal Z}$ is
$$
\eta^*=\xi^*\circ {\bold z}^{-1},\quad {\bold x}\in {\cal X}\quad \mbox{and}\quad {\bold z}({\bold x})={\bold a}+{\bold B}  {\bold x}.
$$
\end{theorem}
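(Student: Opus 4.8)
The plan is to exhibit the map $\xi\mapsto\xi\circ{\bold z}^{-1}$ as a \emph{value-preserving bijection} between designs on ${\cal X}$ and designs on ${\cal Z}$, so that the maximizer $\xi^*$ on ${\cal X}$ is automatically carried to a maximizer $\eta^*$ on ${\cal Z}$. The starting point is an elementary invariance of the integrand. Writing ${\cal I}^{\cal Z}({\bold z},{\bold \gamma}_1,{\bold \gamma}_2)$ for the Kullback--Leibler divergence \eqref{I-quantity2} built from the reparametrized densities $f_i(y|{\bold z};{\bold \gamma}_i)$, Hypothesis \ref{hp:1} yields, for ${\bold z}={\bold a}+{\bold B}{\bold x}$ and ${\bold \gamma}_i=g_i({\bbeta}_i)$, the pointwise identity ${\cal I}^{\cal Z}({\bold a}+{\bold B}{\bold x},\,g_1({\bbeta}_1),\,g_2({\bbeta}_2))={\cal I}({\bold x},{\bbeta}_1,{\bbeta}_2)$, since the inner integral over ${\cal Y}$ is literally unchanged under the two parametrizations.

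Next I would transport this identity through the design integral. Because ${\bold B}$ is non-singular, the affine map ${\bold z}({\bold x})={\bold a}+{\bold B}{\bold x}$ is a bi-Lipschitz homeomorphism of ${\cal X}$ onto ${\cal Z}$; hence $\eta=\xi\circ{\bold z}^{-1}$ is again a probability measure, and $\xi\mapsto\eta$ is a bijection of the two design spaces that preserves weak convergence (so it is a homeomorphism for $d_w$). The change-of-variables formula for push-forward measures then gives $\int_{\cal Z}{\cal I}^{\cal Z}({\bold z},{\bold \gamma}_1,{\bold \gamma}_2)\,\eta(d{\bold z})=\int_{\cal X}{\cal I}^{\cal Z}({\bold a}+{\bold B}{\bold x},{\bold \gamma}_1,{\bold \gamma}_2)\,\xi(d{\bold x})$, which by the pointwise identity equals $\int_{\cal X}{\cal I}({\bold x},{\bbeta}_1,{\bbeta}_2)\,\xi(d{\bold x})$ whenever ${\bold \gamma}_i=g_i({\bbeta}_i)$.

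It then remains to take the infimum over the nuisance parameter. Since $g_2(\cdot)$ is one-to-one, letting ${\bold \gamma}_2$ range over $g_2(\Theta_2)$ is the same as letting ${\bbeta}_2$ range over $\Theta_2$, so taking infima on both sides of the last display gives $I^{\cal Z}_{2,1}(\eta;g_1({\bbeta}_1))=I_{2,1}(\xi;{\bbeta}_1)$ for every $\xi$. Thus the two criteria coincide under the design correspondence, maximizing one is equivalent to maximizing the other, and specializing to $\xi=\xi^*$ identifies $\eta^*=\xi^*\circ{\bold z}^{-1}$ as a KL-optimum design on ${\cal Z}$. Hypothesis \ref{hp:2} enters precisely to keep this well posed: writing $\int_{\cal X}{\cal I}\,\xi(d{\bold x})$ as a term constant in ${\bbeta}_1$ minus the mean log-likelihood ${\cal ML}_2({\bbeta}_2)$ of \eqref{eq:ml}, the existence of a maximizer guarantees that the infimum defining the criterion is attained, so the optimal ${\hat\bbeta}_2$ is genuinely sent by $g_2$ to the optimal ${\hat{\bold \gamma}}_2$ and the argmin set \eqref{eq:regular} transforms consistently.

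I expect the main obstacle to be purely organizational: carrying out the \emph{two simultaneous} changes of variable --- in the experimental point (${\bold x}\leftrightarrow{\bold z}$) and in the parameter (${\bbeta}_i\leftrightarrow{\bold \gamma}_i$) --- so that the infimum over the reparametrized space lines up exactly with the original one, and verifying that $\xi\mapsto\xi\circ{\bold z}^{-1}$ is a genuine bijection of design spaces (with its inverse $\eta\mapsto\eta\circ{\bold z}$) rather than merely a measure-to-measure assignment.
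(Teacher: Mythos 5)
Your proposal is correct, and it arrives at the same key identity as the paper---namely $I^{\cal Z}_{2,1}(\xi\circ{\bold z}^{-1})=I^{\cal X}_{2,1}(\xi)$ for every design $\xi$---but by a somewhat different route. The paper splits the criterion into the entropy of $f_1$ minus the supremum of the mean log-likelihood \eqref{eq:ml}, invokes hypothesis~2 to replace that supremum by its value at a maximizer $\hat{\bbeta}_2$ (with $\hat{\bold \gamma}_2=g_2(\hat{\bbeta}_2)$ maximizing the transformed mean log-likelihood $\widetilde{{\cal ML}}_2$), and then substitutes back to equate the two criteria. You instead keep the Kullback--Leibler integrand whole: the pointwise identity ${\cal I}^{\cal Z}({\bold a}+{\bold B}{\bold x},g_1({\bbeta}_1),g_2({\bbeta}_2))={\cal I}({\bold x},{\bbeta}_1,{\bbeta}_2)$, the push-forward change of variables, and the fact that a bijective reparametrization preserves infima give the value identity directly, with no attainment needed. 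This buys two things: you sidestep the implicit finiteness requirement in the paper's entropy/cross-entropy split (an $\infty-\infty$ issue the paper does not discuss), and you make explicit the final step the paper leaves tacit, namely that $\xi\mapsto\xi\circ{\bold z}^{-1}$ is a value-preserving bijection of the two design spaces, so maximizers correspond. One small inaccuracy in your commentary: you claim hypothesis~2 is what keeps your argument well posed, but in your route it is not needed at all---infima transfer under the bijection $g_2$ whether or not they are attained; it is the paper's proof, which substitutes the attained maximizer into the integral, that genuinely uses it.
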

\begin{proof}
From hypothesis 1 the rival models can be expressed as
$f_i(y|{\bold z};{\bold \gamma}_i)$
where ${\bold z}={\bold a}+{\bold B}  {\bold x}$, $i=1,2$.
The KL-criterion applied to models $f_1(y|{\bold z};{\bold \gamma}_1)$ and $f_2(y|{\bold z};{\bold \gamma}_2)$ is
\begin{eqnarray*}
I_{2,1}^{\cal Z}(\eta)
&=&  \inf_{{\bold \gamma}_2} \int_{\cal Z}\int_{\cal Y}
        \log \frac{f_1(y|{\bold z};{\bold \gamma}_1)}{f_2(y|{\bold z};{\bold \gamma}_2)} \, f_1(y|{\bold z};{\bold \gamma}_1)\,dy\,\eta(d{\bold z})\\
&=&  \int_{\cal Z}\int_{\cal Y} \log f_1(y|{\bold z};{\bold \gamma}_1) \, f_1(y|{\bold z};{\bold \gamma}_1)\,dy\,\eta(d{\bold z}) \\
&-&   \sup_{{\bold \gamma}_2} \int_{\cal Z}\int_{\cal Y} \log f_2(y|{\bold z};{\bold \gamma}_2) \, f_1(y|{\bold z};{\bold \gamma}_1)\,dy\,\eta(d{\bold z}).
\end{eqnarray*}
Let
$$
\widetilde{{\cal ML}}_2({\bold \gamma}_2)= \int_{\cal Z}\int_{\cal Y} \log f_2(y|{\bold z};{\bold \gamma}_2) \, f_1(y|{\bold z};{\bold \gamma}_1)\,dy\,\eta(d{\bold z});
$$
from hypothesis~\ref{hp:1} and setting $\xi(d{\bold x})=\eta(d{\bold z})$
$$
\widetilde{{\cal ML}}_2({\bold \gamma}_2)={{\cal ML}}_2(g_2^{-1}({\bold \gamma}_2))
$$
therefore
\begin{equation}\label{eq:mle}
\hat{\bold \gamma}_2 \in \arg\sup \widetilde{{\cal ML}}_2({\bold \gamma}_2) \qquad \text{if and only if}\qquad
\hat{{\bbeta}}_2 \in \arg\sup {{\cal ML}}_2({\bbeta}_2)
\end{equation}
where $\hat{\bold \gamma}_2 = g_2(\hat{\bbeta}_2)$.

Using hypothesis~\ref{hp:1} and Equation~\eqref{eq:mle},
\begin{eqnarray*}
I_{2,1}^{\cal Z}(\eta)
&=& \int_{\cal Z}\int_{\cal Y} \log f_1(y|{\bold z};{\bold \gamma}_1) \, f_1(y|{\bold z};{\bold \gamma}_1)\,dy\,\eta(d{\bold z}) \\
&-&   \int_{\cal Z}\int_{\cal Y} \log f_2(y|{\bold z};\hat{\bold \gamma}_2) \, f_1(y|{\bold z};{\bold \gamma}_1)\,dy\,\eta(d{\bold z})\\
&=&  \int_{\cal X}\int_{\cal Y} \log f_1(y|{\bold x};{\bbeta}_1) \, f_1(y|{\bold x};{\bbeta}_1)\,dy\,\xi(d{\bold x}) \\
&-&   \int_{\cal X}\int_{\cal Y} \log f_2(y|{\bold x};\hat{{\bbeta}}_2) \, f_1(y|{\bold x};{\bbeta}_1)\,dy\,\xi(d{\bold x})=I_{2,1}^{\cal X}(\xi),
\end{eqnarray*}
which proves the theorem.
\end{proof}


Continuity is a very nice property which is necessary to prove the convergence of the first order algorithm; see \cite{AMT_Moda10}, also recalled in Section \ref{sect:alg}.
We require that the Kullback-Leibler divergence between the two conditional distributions is continuous and also Lipschitz with respect to the experimental condition.
\begin{assumption}\label{ass:contdens2}
The Kullback-Leibler divergence ${\cal I}({\bold x},{\bbeta}_1,{\bbeta}_{2})$ given in Equation~\eqref{I-quantity2}
is a Lipschitz function with respect to ${\bold x}$.
\end{assumption}
Note that the Kullback-Leibler divergence between different models is the difference between the cross entropy of the two models  and the information entropy of the first one. The information entropy is a continuously differentiable function of the parameters in almost all the classes of parametric models.
This smoothness is also observed for the cross entropy between different classes of models with the same support (for instance, Lognormal, Weibull, Gamma, \ldots). Therefore, since $\spaceX$ is compact, if the parameters of the rival models (which are assumed to depend on ${\bold x}$ through some coefficients) are continuously differentiable functions of the  experimental condition ${\bold x}$, then the Kullback-Leibler divergence is a continuously differentiable function of the  experimental conditions, and hence a Lipschitz function.

Denote by ${\cal J}(\xi,\bbeta_1,\bbeta_2)$ the average of the function
${\cal I}({\bold x},\bbeta_1,\bbeta_2)$ in \eqref{I-quantity2} with respect to the probability measure $\xi$, namely
\[
{\cal J}(\xi,\bbeta_1,\bbeta_2) =
\int_{\cal X}
{\cal I}({\bold x},\bbeta_1,\bbeta_2)\, d\xi({\bold x}) =
\int_{\cal X} \int_{\cal Y} \log
\frac{f_1(y|{\bold x};{\bbeta}_1)}{f_{2}(y|{\bold x};{\bbeta}_{2})}\; f_1(y|{\bold x};{\bbeta}_1) \, dy\,\xi(d{\bold x}).
\]
Again, since $\cal X$ is compact, Assumption~\ref{ass:contdens2} implies that
the function ${\cal J}$ is continuous with respect to $\xi$. Moreover, this function is linear in $\xi$.
The problem of finding a KL-optimal design as in equation \eqref{eq:KL} is
an infinite dimension minmax problem.
Our goal
is to prove that
\(
I_{2,1}(\xi) = \inf_{\bbeta_2} {\cal J}(\xi,\bbeta_1,\bbeta_2)
\)
is continuous as an extension of classical results for semi-infinite problem (see for instance
\cite{Polak}) to our context.
We start with a counterexample, which shows that Assumption~\ref{ass:contdens2} is not sufficient for
$I_{2,1}(\xi)$ to be continuous.

\begin{example}[$I_{2,1}(\xi; {\bbeta}_1)$ is not continuous] 
Take ${\cal X} = [0,1]$,
$\Theta_2 = (0,\infty)$, and define
\[
{\cal I}({\bold x},{\bbeta}_1,\bbeta_2) =
\begin{cases}
2((2\bbeta_2 -1)  {\bold x} + (1-\bbeta_2))  & \text{if } 0 < \bbeta_2\leq 1 \\
(\bbeta_2+1) {\bold x}^{\bbeta_2} & \text{if } 1 < \bbeta_2
\end{cases}
\]
We have:
\begin{itemize}
\item
${\cal I}({\bold x},{\bbeta}_1,{\bbeta}_{2})$ is a continuous function
on ${\cal X}\times\Theta_2$;
\item
${\cal I}({\bold x},{\bbeta}_1,{\bbeta}_{2})$ is a convex and Lipschitz function
of ${\bold x}$, for any ${\bbeta}_{2}\in\Theta_2$;
\item $I_{2,1}(\delta_{\bold x}; {\bbeta}_1)=0$ for any ${\bold x}\in{\cal X}$.
\end{itemize}
Take $\xi_n$ be the uniform distribution on $[0, 1-1/n]$; it can be easily proved
that $d_w(\xi_n,\xi)\to 0$, where $\xi$ is the uniform distribution on $[0, 1]$.
We have
\[
\begin{aligned}
\int_{\cal X} {\cal I}({\bold x},{\bbeta}_1,\bbeta_2) d\xi_n({\bold x}) &=
\int_{0}^{1-1/n} \frac{{\cal I}({\bold x},{\bbeta}_1,\bbeta_2)}{1-1/n} d{\bold x} \\
& =
\begin{cases}
1 - \frac{2\bbeta_2 -1}{n}  & \text{if } 0 < \bbeta_2\leq 1 \\
(1-1/n)^{\bbeta_2} & \text{if } 1 < \bbeta_2
\end{cases}
\end{aligned}
\]
while
\[
\int_{\cal X} {\cal I}({\bold x},{\bbeta}_1,\bbeta_2) d\xi({\bold x}) =
\int_{0}^{1} {\cal I}({\bold x},{\bbeta}_1,\bbeta_2) d{\bold x} \equiv 1.
\]
Hence,
\[
\begin{aligned}
I_{2,1}(\xi_n) &= \inf_{\bbeta_2\in\Theta_2}
\int_{\cal X} {\cal I}({\bold x},{\bbeta}_1,\bbeta_2) d\xi_n({\bold x}) = 0\\
& \neq
1 = \inf_{\bbeta_2\in\Theta_2}
\int_{\cal X} {\cal I}({\bold x},{\bbeta}_1,\bbeta_2) d\xi({\bold x}) =
I_{2,1}(\xi).
\end{aligned}
\]
\end{example}

We give here a mild assumption which is satisfied in many situations.
In fact, when we fix $\bbeta_1$, we can expect
that that the Kullback-Leibler divergence ${\cal I}({\bold x},{\bbeta}_1,{\bbeta}_{2})$ is ``dominated'' in $\bbeta_2$, in that
if ${\cal I}({\bold x},{\bbeta}_1,{\bbeta}_{2})$ is too big for some
${\bold x}$, there is
another model $f_2(y|{\bold x};\widetilde{{\bbeta}}_{2})$ that is always
closer to $f_1(y|{\bold x};{\bbeta}_1)$ and dominated by a constant $M({\bbeta}_1)$.

\begin{assumption}\label{ass:contdens3}
For any fixed ${\bbeta}_1$, there exists $M=M(\bbeta_1)>0$ such that if
${\cal I}({\bold x},{\bbeta}_1,{\bbeta}_{2})>M$ for some ${\bold x}\in{\cal X}$,
then there will exists $\widetilde{{\bbeta}}_{2}$ such that
\[
{\cal I}({\bold x},{\bbeta}_1,{\bbeta}_{2})\geq
{\cal I}({\bold x},{\bbeta}_1,\widetilde{{\bbeta}}_{2}), \quad
\forall {\bold x}\in{\cal X},
\]
and
\[
\sup_{{\bold x}\in{\cal X}} {\cal I}({\bold x},{\bbeta}_1,\widetilde{{\bbeta}}_{2})
\leq M.
\]
 \end{assumption}

\begin{theorem}\label{theo:cont}
Assume \ref{ass:contdens2} and \ref{ass:contdens3}.
The KL-criterion (\ref{kl1}) is a locally Lipschitz function and hence it is a
continuous function of $\xi$.
\end{theorem}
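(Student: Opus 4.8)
The plan is to convert the statement into a single estimate that is uniform in $\bbeta_2$ for the linear functionals ${\cal J}(\cdot,\bbeta_1,\bbeta_2)$, after first using Assumption~\ref{ass:contdens3} to discard the parameter values responsible for the discontinuity in the counterexample. I fix $\bbeta_1$, let $M=M(\bbeta_1)$ be the constant of Assumption~\ref{ass:contdens3}, and introduce the sublevel set
\[
B=\big\{\bbeta_2\in\Theta_2:\ {\cal I}({\bold x},\bbeta_1,\bbeta_2)\le M\ \text{for all }{\bold x}\in{\cal X}\big\}.
\]
My first step is to show that the infimum defining $I_{2,1}(\xi)$ may be restricted to $B$, that is $I_{2,1}(\xi)=\inf_{\bbeta_2\in B}{\cal J}(\xi,\bbeta_1,\bbeta_2)$ for every $\xi$. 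The inequality ``$\le$'' is immediate; for ``$\ge$'' I take any $\bbeta_2\notin B$, so that ${\cal I}({\bold x}_0,\bbeta_1,\bbeta_2)>M$ for some ${\bold x}_0$, and apply Assumption~\ref{ass:contdens3} to get $\widetilde\bbeta_2\in B$ with ${\cal I}({\bold x},\bbeta_1,\bbeta_2)\ge{\cal I}({\bold x},\bbeta_1,\widetilde\bbeta_2)$ for every ${\bold x}$. Integrating this pointwise domination against $\xi$ gives ${\cal J}(\xi,\bbeta_1,\bbeta_2)\ge{\cal J}(\xi,\bbeta_1,\widetilde\bbeta_2)$, so discarding $\Theta_2\setminus B$ changes nothing.

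The decisive step is to produce a Lipschitz constant that serves all the relevant $\bbeta_2$ at once. On $B$ the integrands are uniformly bounded, $0\le{\cal I}(\cdot,\bbeta_1,\bbeta_2)\le M$, but boundedness alone does not control their ${\bold x}$-slopes, so I would proceed by localizing. Fixing a design $\xi_0$, I would first show that there are a $d_w$-neighbourhood $U$ of $\xi_0$ and a compact set $K\subseteq B$ such that $I_{2,1}(\xi)=\inf_{\bbeta_2\in K}{\cal J}(\xi,\bbeta_1,\bbeta_2)$ for every $\xi\in U$, i.e. all near-minimizers stay in a fixed compact parameter set as $\xi$ varies near $\xi_0$. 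On such a $K$ I can then use Assumption~\ref{ass:contdens2} together with the joint smoothness described after it: the map $\bbeta_2\mapsto \mathrm{Lip}_{\bold x}{\cal I}(\cdot,\bbeta_1,\bbeta_2)=\sup_{\bold x}|\partial_{\bold x}{\cal I}({\bold x},\bbeta_1,\bbeta_2)|$ is continuous, hence bounded on $K$, giving a single constant
\[
L=\sup_{\bbeta_2\in K}\mathrm{Lip}_{\bold x}{\cal I}(\cdot,\bbeta_1,\bbeta_2)<\infty .
\]

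The last step transfers the bound to the design variable and passes to the infimum. For each fixed $\bbeta_2\in K$ the function ${\cal I}(\cdot,\bbeta_1,\bbeta_2)/L$ is $1$-Lipschitz on ${\cal X}$, so the Kantorovich--Rubinstein dual representation of $d_w$ yields
\[
\big|{\cal J}(\xi_1,\bbeta_1,\bbeta_2)-{\cal J}(\xi_2,\bbeta_1,\bbeta_2)\big|
=\Big|\int_{\cal X}{\cal I}({\bold x},\bbeta_1,\bbeta_2)\,d(\xi_1-\xi_2)({\bold x})\Big|
\le L\,d_w(\xi_1,\xi_2),
\]
uniformly in $\bbeta_2\in K$. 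Applying the elementary inequality $\big|\inf_a F(a)-\inf_a G(a)\big|\le\sup_a|F(a)-G(a)|$ to $F={\cal J}(\xi_1,\bbeta_1,\cdot)$ and $G={\cal J}(\xi_2,\bbeta_1,\cdot)$ over $K$ then gives $|I_{2,1}(\xi_1)-I_{2,1}(\xi_2)|\le L\,d_w(\xi_1,\xi_2)$ for all $\xi_1,\xi_2\in U$, which is precisely local Lipschitz continuity; continuity follows at once.

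I expect the genuine difficulty to lie entirely in the second step. Assumption~\ref{ass:contdens2} controls each ${\cal I}(\cdot,\bbeta_1,\bbeta_2)$ only separately, and the counterexample shows that without the trimming of Assumption~\ref{ass:contdens3} the constants $\mathrm{Lip}_{\bold x}{\cal I}(\cdot,\bbeta_1,\bbeta_2)$ can blow up exactly along the sequence of $\bbeta_2$ that realizes the infimum, destroying continuity. Confinement to $B$ removes those offending parameters, but --- since uniform boundedness of the integrands does not by itself bound their slopes --- the argument still hinges on showing that the effective minimizing set can be taken compact, so that the continuous map $\bbeta_2\mapsto\mathrm{Lip}_{\bold x}{\cal I}$ is bounded on it. Establishing this compactness, and hence the finiteness of $L$, is the step I would treat most carefully.
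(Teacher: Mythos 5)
Your steps 1 and 3 are sound and coincide with ingredients of the paper's proof: restricting the infimum to $B=\Theta_2(\bbeta_1)$ is exactly the paper's Equation~\eqref{kl2}, and the identification of $d_w$ with the dual Lipschitz norm (Kantorovich--Rubinstein) combined with $|\inf F-\inf G|\le\sup|F-G|$ is fine. The genuine gap is your step 2, and it is not merely a point to be ``treated most carefully'': it cannot be established from Assumptions~\ref{ass:contdens2} and~\ref{ass:contdens3}, because these give only (i) a Lipschitz constant in $x$ \emph{separately for each} $\bbeta_2$ and (ii) a uniform bound $M$ on the \emph{values} of ${\cal I}$ over $B$; they provide no compactness of minimizing sets, no attainment of the infimum, and no continuity of $\bbeta_2\mapsto{\cal I}$ or of $\bbeta_2\mapsto\mathrm{Lip}_x{\cal I}(\cdot,\bbeta_1,\bbeta_2)$ (the smoothness discussion following Assumption~\ref{ass:contdens2} is motivation for when that assumption holds, not a hypothesis of the theorem). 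Concretely, at the same level of abstraction as the paper's own counterexample, take ${\cal X}=[0,1]$, $\Theta_2=(0,\infty)$ and ${\cal I}(x,\bbeta_1,\bbeta_2)=M|\sin(\bbeta_2 x)|$. Both assumptions hold (Assumption~\ref{ass:contdens3} vacuously, since ${\cal I}\le M$ everywhere, so $B=\Theta_2$), and the theorem's conclusion is true by the paper's argument. But at $\xi_0=\delta_0$ your localization fails: for $\xi=\delta_{1/n}$ one has ${\cal J}(\delta_{1/n},\bbeta_1,\bbeta_2)=M|\sin(\bbeta_2/n)|$, whose infimum over $B$ is $0$, attained exactly at $\bbeta_2\in\{k\pi n:k\ge 1\}$; these minimizers escape to infinity as $n\to\infty$, so for every compact $K\subset(0,\infty)$ and all large $n$,
\[
\inf_{\bbeta_2\in K}{\cal J}(\delta_{1/n},\bbeta_1,\bbeta_2)=M\sin\bigl(\min(K)/n\bigr)>0=\inf_{\bbeta_2\in B}{\cal J}(\delta_{1/n},\bbeta_1,\bbeta_2),
\]
while $d_w(\delta_{1/n},\delta_0)\to 0$. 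Hence no pair $(U,K)$ as in your step 2 exists, and the uniform constant $L$ you need has nothing to be defined on.

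The paper's proof bypasses uniform slope control altogether, and this is the idea your proposal is missing. It works in the Banach space ${\mathcal V}$ of signed measures equipped with the dual Lipschitz norm, whose restriction to $\spaceX$ is $d_w$. Assumption~\ref{ass:contdens2} is used only one $\bbeta_2$ at a time, to make each linear functional $\xi\mapsto\int_{\cal X}{\cal I}(x,\bbeta_1,\bbeta_2)\,\xi(dx)$ individually continuous on ${\mathcal V}$; the constants $\|{\cal I}(\cdot,\bbeta_1,\bbeta_2)\|_L$ may well be unbounded over $\bbeta_2$. Then $I_{2,1}$, being an infimum of continuous linear functionals, is concave and upper semicontinuous, and the uniform \emph{value} bound $M$ over $\Theta_2(\bbeta_1)$ makes it finite and locally bounded; a standard convex-analysis theorem (the cited Borwein--Lewis result) states that a concave, finite, locally bounded function on a Banach space is locally Lipschitz. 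In short, concavity plus a bound on values substitutes for the bound on slopes that your argument requires but the hypotheses do not supply.
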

\begin{proof}
Let ${\bbeta}_1$ be fixed and $M$ be as in
Assumption~\ref{ass:contdens3}. Define
\begin{equation}
\Theta_2({\bbeta}_1)= \{ \widetilde{{\bbeta}}_{2} \in
\Theta_2 \colon
\sup_{{\bold x}\in{\cal X}} {\cal I}({\bold x},{\bbeta}_1,\widetilde{{\bbeta}}_{2})
\leq M
\}.
\label{theta2_beta1}
\end{equation}
The KL-criterion (\ref{kl1}), for any $\xi\in {\spaceX}$, may be rewritten as
\begin{equation}
I_{2,1}(\xi)=
\inf_{{\bbeta}_{2}\in
\Theta_2({\bbeta}_1)}
\int_{\cal X}
{\cal I}({\bold x},\bbeta_1,\bbeta_2)
\,\xi(d{\bold x}),
\label{kl2}
\end{equation}
for Assumption~\ref{ass:contdens3}.

Let ${\mathcal V}$ be the real vector space of all signed finite measures on
${\cal X} $ (equipped with the usual Borel $\sigma$-algebra ${\cal B}$), which contains
${\spaceX} $ as proper, closed, convex subset. Denoting by $\|h\|_L$ the operator norm of the Banach space of Lipschitz function on ${\cal X}$, the vector space ${\mathcal V}$, equipped with the norm
\[
\|\xi\|_{{\mathcal V}}
= \sup\Big\{ \Big|\int_{\cal X} h({\bold x}) \,\xi(d{\bold x}) \Big|, \|h\|_L\leq 1\Big\},
\]
is a Banach space. (This is a consequence of the results in \cite{Varadarajan} and \cite{Dudley}).


The map $\xi \mapsto
\int_{\cal X} {\cal I}({\bold x},{\bbeta}_1,{\bbeta}_{2}) \,\xi(d{\bold x}),
$ is a linear functional on
${\mathcal V}$. 
Moreover, assumption~\ref{ass:contdens2}
guarantees its boundness on the unit ball $\|\xi\|_{{\mathcal V}} \leq 1 $:
\begin{align*}
\sup_{\|\xi\|_{{\mathcal V}} \leq 1 }
\Big|\int_{\cal X} {\cal I}({\bold x},{\bbeta}_1,{\bbeta}_{2}) \,\xi(d{\bold x})\Big|
& \leq
\sup_{
\substack{
            \|\xi\|_{{\mathcal V}} \leq 1\\
            \|h\|_L \leq \|{\cal I}(\cdot,{\bbeta}_1,{\bbeta}_{2}) \|_L
		}
 }
\Big|\int_{\cal X} h({\bold x}) \,\xi(d{\bold x})\Big|
\\
& =
\|{\cal I}(\cdot,{\bbeta}_1,{\bbeta}_{2}) \|_L
\sup_{
\substack{
            \|\xi\|_{{\mathcal V}} \leq 1\\
            \|g\|_L \leq 1
		}
 }
\Big|\int_{\cal X} g({\bold x}) \,\xi(d{\bold x})\Big|
\\
&=
\|{\cal I}(\cdot,{\bbeta}_1,{\bbeta}_{2}) \|_L ,
\end{align*}
and hence $\xi \mapsto
\int_{\cal X} {\cal I}({\bold x},{\bbeta}_1,{\bbeta}_{2}) \,\xi(d{\bold x})
$ is a continuous function 
(see, e.g., \cite{Borwein}).
The function
\[
\xi\mapsto \inf_{{\bbeta}_{2}\in \Theta_2({\bbeta}_1)}
\int_{\cal X} {\cal I}({\bold x},{\bbeta}_1,{\bbeta}_{2}) \,\xi(d{\bold x})
\]
is concave and upper semi-continuous function since it is the point-wise infimum
 of linear continuous functions.
Moreover, \( -\infty <
\inf_{{\bbeta}_{2}\in \Theta_2({\bbeta}_1)}
\int_{\cal X} {\cal I}({\bold x},{\bbeta}_1,{\bbeta}_{2}) \,\xi(d{\bold x}) <\infty
\)
since, for any $\bbeta_2\in \Theta_2({\bbeta}_1)$,
\[
\Big| \int_{\cal X} {\cal I}({\bold x},{\bbeta}_1,{\bbeta}_{2}) \,\xi(d{\bold x}) \Big| \leq M \,\|\xi\|_{{\mathcal V}} ,
\]
as a consequence of \eqref{theta2_beta1}. Therefore
(see \cite{Borwein}) the function
\[
\xi\mapsto \inf_{{\bbeta}_{2}\in \Theta_2({\bbeta}_1)}
\int_{\cal X} {\cal I}({\bold x},{\bbeta}_1,{\bbeta}_{2}) \,\xi(d{\bold x})
\]
is locally Liptschtz and hence continuous
on the Banach space $({\mathcal V},
\|\cdot\|_{{\mathcal V}} )$.

Recall that the set ${\spaceX}$ is the set of the possible experimental designs $\xi$.
When $\xi_1,\xi_2\in {\spaceX}$, the Kantorovich-Wasserstein  distance $d_w$ can be rewritten also as (see \cite{Gibbs02})
\[
d_w(\xi_1,\xi_2)
= \sup\Big\{ \Big|\int_{\cal X} h({\bold x}) \,d\xi_1({\bold x}) -
\int_{\cal X} h({\bold x}) \,d\xi_2({\bold x}) \Big|, \|h\|_L\leq 1\Big\} = \|\xi_1-\xi_2\|_{{\mathcal V}}.
\]
Hence the KL-criterion \eqref{kl1} is a locally Liptschtz and continuous function on ${\spaceX}$.
\end{proof}

\section{First order algorithm to obtain KL-optimum designs}\label{sect:alg}
To construct a KL-optimum design $\xi^*$, \cite{Tomm2007} propose the use of the first order algorithm:
 \begin{enumerate}
\item[a)]\label{stepa}
Given $\xi_n$, find
 \begin{enumerate}
 \item[a1)]\label{stepa1}
$$
\bbeta_{2,n}=\arg \min_{\bbeta_2 \in \Omega_2}
\int_{\cal X} \int_{\cal Y}
\log
\frac{f_1(y|{\bold x};\bbeta_1)}{f_{2}(y|{\bold x};\bbeta_{2})}\; f_1(y|{\bold x};\bbeta_1) \,dy\, \xi_{n}(d{\bold x})
$$
 \item[a2)]\label{stepa2}
$$
{\bold x}_n=\arg \max_{{\bold x} \in{\cal X} }
\int_{\cal Y} \log \frac{f_1(y|{\bold x};\bbeta_1)}{f_{2}(y|{\bold x};\bbeta_{2,n})}\; f_1(y|{\bold x};\bbeta_1) \,dy
$$
\end{enumerate}
\item[b)]\label{stepb}
Properly choose $0\leq \alpha_n \leq 1$ and construct $\xi_{n+1}=(1-\alpha_n) \xi_{n}+\alpha_n \delta_{{\bold x}_n}$.
\end{enumerate}
Moreover,  \cite{Tomm2007} prove that, for any $\xi$,
$U(\xi)\leq  I_{2,1}(\xi)/I_{2,1}(\xi^*) \leq 1$ where
\begin{equation}
U(\xi)=\left[1+\frac{\max_{{\bold x}\in {\cal X}}   \psi({\bold x};\xi)}{I_{2,1}(\xi)} \right]^{-1},
\label{upper-bound}
\end{equation}
and
\begin{equation}
\psi({\bold x};\xi)=  {\cal I}({\bold x},\bbeta_1, \hat{\bbeta}_2)  -\int_{\cal X}{\cal I}({\bold s},\bbeta_1, \hat{\bbeta}_2)\,\xi(d{\bold s})
 \label{derivata}
 \end{equation}
is the directional derivative of $I_{2,1}(\xi)$ at a regular design $\xi$ in the direction of $\delta_{\bold x}-\xi$ and $\hat{\bbeta}_2$ is the unique element of (\ref{eq:regular}).
According to  \cite{Tomm2007}, the iterative procedure should stop at step $N$ if the upper bound of the efficiency $U(\xi_N)>\delta$,
where $0<\delta<1$ is a  value chosen by the experimenter, for instance $\delta=0.99$.

However,  \cite{Tomm2007} give no specific rule to choose $\alpha_n$, neither they provide any proof of the convergence of the algorithm.
Differently, authors \cite{AMT_Moda10}  suggest to use
\begin{equation}
\alpha_n=\arg\max_{\alpha\in [0,1]} {I_{2,1}} [(1-\alpha) \xi_n + \alpha \delta_{\bold x}]
\label{alpha}
\end{equation}
and using the theory of
point-to-set mappings they provide useful convergence results. As a consequence of Theorem 1 in \cite{AMT_Moda10} and of the continuity of $I_{2,1}(\xi)$ proved in Section \ref{sect:invcont}, the following theorem states that if the algorithm explores regular designs (see \eqref{eq:regular} in Section \ref{sect:frame}), then it converges to the set of KL-optimum designs.

\begin{theorem}\label{theo:conv2}
Let $\xi_0$ be an initial design. For any $n\ge0$, let $\xi_{n+1}$ be one of the designs obtained by the first order algorithm at step $(n+1)$, with $\alpha_n$ as in (\ref{alpha}).

If $\xi_{n}$ is a sequence of KL-regular designs then, as $n\rightarrow \infty$,
$$\inf_{\xi^*\in \arg\max I_{2,1}(\xi)}d_w(\xi_n,\xi^*)\rightarrow 0$$
\noindent and
$$|I_{2,1}(\xi_n) - \max_{\xi}I_{2,1}(\xi)| \rightarrow 0.$$

\noindent In particular, if the set $\{\xi^* \in \Xi: \xi^*= \displaystyle \arg\max_\xi I_{2,1}(\xi)\}$ is a singleton, then $\xi_{n}\to\xi^*$.
\end{theorem}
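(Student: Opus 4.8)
The plan is to recognize the first-order procedure as a point-to-set (Zangwill-type) algorithmic map on the compact metric space $(\Xi, d_w)$ and to invoke the abstract global convergence theorem of \cite{AMT_Moda10} (Theorem~1 therein), whose crucial hypothesis---continuity of the objective---is exactly what Theorem~\ref{theo:cont} now supplies. First I would set up the ingredients of the abstract theory: the solution set $\Gamma = \arg\max_\xi I_{2,1}(\xi)$, an ascent (merit) function, and an algorithmic map $A$ sending $\xi_n$ to the set of designs $\xi_{n+1}$ produced by steps (a)--(b). The natural merit function is $I_{2,1}$ itself. Because the step length is chosen by the exact line search~\eqref{alpha}, the sequence $I_{2,1}(\xi_n)$ is non-decreasing (take $\alpha=0$), and it is \emph{strictly} increasing whenever $\xi_n \notin \Gamma$: at a non-optimal regular design the directional derivative $\max_{{\bold x}} \psi({\bold x};\xi_n)$ in~\eqref{derivata} is strictly positive, so the update $\xi_n \mapsto (1-\alpha)\xi_n + \alpha \delta_{{\bold x}_n}$ raises $I_{2,1}$ for small $\alpha>0$.

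The three hypotheses of the Zangwill-type theorem would then be verified in turn. Compactness of the iterate set is immediate, since $(\Xi, d_w)$ is itself compact (noted after~\eqref{eq:crit}). The ascent property is the strict monotonicity just described. The delicate hypothesis, and the one I expect to be the main obstacle, is the \emph{closedness} (upper semicontinuity as a set-valued map) of $A$ off $\Gamma$: one must show that if $\xi^{(k)} \to \bar\xi$ and $\eta^{(k)} \in A(\xi^{(k)})$ with $\eta^{(k)} \to \bar\eta$, then $\bar\eta \in A(\bar\xi)$. This reduces to continuity of the two inner solution maps, the minimizer $\hat{\bbeta}_2(\xi)$ of the averaged divergence in step (a1) and the maximizer ${\bold x}(\xi)$ of the sensitivity in step (a2). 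Here the assumption that the $\xi_n$ are KL-regular is precisely what is needed: regularity forces $\Omega_2(\xi;\bbeta_1)$ in~\eqref{eq:regular} to be a singleton, so $\hat{\bbeta}_2(\xi)$ is single-valued and, by a standard (Berge) maximum-theorem argument using joint continuity of ${\cal J}$ and compactness of the relevant parameter set, continuous; this in turn renders the directional derivative $\psi(\cdot\,;\xi)$, and hence the whole map $A$, closed. This is exactly the point where Theorem~\ref{theo:cont} enters and where the restriction to regular designs cannot be dropped.

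Granting closedness, the abstract convergence theorem yields that every limit point of $\{\xi_n\}$ lies in $\Gamma$, and I would finish by a compactness-and-continuity argument. Since $I_{2,1}(\xi_n)$ is non-decreasing and bounded above by $\max_\xi I_{2,1}(\xi)$, it converges to some limit $L$; choosing (by compactness of $\Xi$) a subsequence $\xi_{n_k} \to \bar\xi \in \Gamma$ and using continuity of $I_{2,1}$ gives $L = I_{2,1}(\bar\xi) = \max_\xi I_{2,1}(\xi)$, whence $|I_{2,1}(\xi_n) - \max_\xi I_{2,1}(\xi)| \to 0$. For the convergence of the designs, suppose $\inf_{\xi^* \in \Gamma} d_w(\xi_n,\xi^*) \not\to 0$; then some subsequence stays a fixed distance from $\Gamma$, yet by compactness it admits a further subsequence converging to a point of $\Gamma$---a contradiction. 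Hence $\inf_{\xi^* \in \Gamma} d_w(\xi_n,\xi^*) \to 0$, and when $\Gamma = \{\xi^*\}$ is a singleton this is exactly $\xi_n \to \xi^*$.
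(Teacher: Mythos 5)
Your proposal is correct and follows essentially the same route as the paper: both reduce the theorem to Theorem~1 of \cite{AMT_Moda10} (the point-to-set/Zangwill convergence result), use compactness of $({\spaceX},d_w)$ to extract convergent subsequences, use the continuity established in Theorem~\ref{theo:cont} to satisfy that theorem's hypothesis (the paper phrases this as compactness of the sub-level set $\{\xi: I_{2,1}(\xi)\geq I_{2,1}(\xi_0)\}$, a closed subset of the compact $\spaceX$), and finish with the same contradiction argument for $\inf_{\xi^*}d_w(\xi_n,\xi^*)\to 0$ and a continuity argument for the criterion values. The only difference is that you re-derive the internal hypotheses of the cited theorem (ascent property, closedness of the algorithmic map), which the paper simply treats as part of the cited result.
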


\begin{proof}
We prove the first part of the theorem by way of contradiction.
Suppose that for every $\epsilon>0$ and for every $N\ge 1$ there exist $n_N\ge N$ such that
\begin{equation}\label{eq:abs}
d_w(\xi_{n_N},\xi^*)\ge \epsilon
\end{equation}
\noindent for all $\xi^*\in \arg\max I_{2,1}(\xi)$.

In \cite[Theorem 1]{AMT_Moda10} it is stated that if $(\xi_n)_n$ is a sequence of regular designs generated by the above described algorithm, then the limit of any converging subsequence of $(\xi_n)_n$ is a KL-optimum design. In \cite[Theorem 1]{AMT_Moda10} it is required that the sub-level $\{ \xi\in{\Xi}\colon I_{2,1}(\xi)\geq I_{2,1}(\xi_0)\}$ is compact. This is always verified as a consequence of Theorem~\ref{theo:cont}: the sub-level is a closed subset of the compact set $\Xi$.

The sequence $(\xi_{n_N})_N$ in \eqref{eq:abs} has at least one convergent subsequence since ${\spaceX}$ is compact; then, from \cite[Theorem 1]{AMT_Moda10} this converges to a KL-optimum design, which contradicts \eqref{eq:abs}.
It follows that for every $\epsilon>0$ there exists $N$ such that for every $n\ge N$
\begin{equation}\label{eq:conv}
\inf_{\xi^*\in \arg\max I_{2,1}(\xi)}d_w(\xi_n,\xi^*)<\epsilon,
\end{equation}
and hence the first part of the theorem is proved. Since $I_{2,1}(\xi)$ is continuous and $\{\xi^* \in \arg\max I_{2,1}(\xi)\}$ is compact, the second part of the result follows from \eqref{eq:conv}.
\end{proof}

A drawback of the application of the first order algorithm is that there is no guarantee that $\xi_{n+1}$ is a regular design in the sense of \eqref{eq:regular}, even if $\xi_n$ is. In the next section we prove that, for generalized linear models (GLM), the algorithm always moves at regular designs; thus it converges, provided that the initial design $\xi_0$ has a non singular information matrix.

\subsection{Regular designs for GLM}\label{sect:GLM}
As recalled in Section \ref{sect:frame}, if the KL-criterion is used to find an optimum design to discriminate between two models, a regular design is such that there is a unique estimate of the parameters  $\bbeta_2$ as in \eqref{eq:regular}.
In many situations, however, the aim of an experiment is to estimate as precisely as possible the parameters (or a function of the parameters) of the assumed known model. In this context a design is called regular if its Fisher information matrix is non singular.
In this section it is proved that these two different definitions of \lq\lq regular design'' are equivalent
for a large class of useful statistical models.

In what follows, we specialize the notation for a GLM.
${\bold X}$ is the design matrix whose $i$-th row is ${\bold x}_{i.}^T$ (i.e. ${\bold x}_{i.}$ is the $d_2 \times 1$ vector of experimental conditions for the $i$-th unit),
$\bbeta_2\in\Theta_2$ is the $d_2 \times 1$ vector of regression parameters of $f_2(y|{\bold x},\bbeta_2)$,
$\mu_i={\rm E}_{2} (Y|{\bold x}_{i.})$ and ${\rm Var}_i(Y)={\rm Var}_{2}(Y|{\bold x}_{i.})$ are the response mean and the response variance under model $f_2(y|{\bold x}_{i.},\bbeta_2)$, and
$\eta_i={\bold x}_{i.}^T \bbeta_2$ is the linear predictor.
Both $\mu_i$ and ${\rm Var}_i(Y)$ depend on ${\bold x}_{i.}$ and $\bbeta_2$ through $\eta_i$, in particular $g(\mu_i)=\eta_i$ where  $g(\cdot)$ is a link function.
 It is well known that for the generalized linear model the Fisher information matrix is
\begin{equation}
{\bold J}={\bold X}^T {\bold W} {\bold X},
\label{inf-matrix}
\end{equation}
where
$$
{\bold W}={\rm diag}\left[ \frac{1}{{\rm Var}_i(Y)}  \left(  \frac{\partial \mu_i}{\partial \eta_i}  \right)^2 \right]
$$
is a diagonal matrix which depends on both the design matrix ${\bold X}$ and the parameter vector $\bbeta_2$.

From (\ref{inf-matrix}), ${\bold J}=\tilde {\bold X}^T \tilde {\bold X}$, where $\tilde {\bold X}={\bold W}^{1/2} {\bold X}$; thus
$${\rm Range}\; {\bold J}= {\rm Range}\; \tilde {\bold X}^T= {\rm Range}\; {\bold X}^T,$$
since ${\bold W}$ is a diagonal non-singular matrix. Therefore, the Fisher information matrix is non-singular (i.e. ${\rm Rank}({\bold J})=d_2$) if and only if
 ${\rm Rank}({\bold X})=d_2$.

As recalled in Section \ref{sect:frame}, in the context of discrimination between rival models, 
an exact design $\xi=\frac{1}{n}\sum_i \delta_{{\bold x}_{i.}}$ 
is regular if the set
$$
\Omega_2 \Big(\frac{1}{n}\sum_i \delta_{{\bold x}_{i.}} \Big) =
\Big\{
\widehat{\bbeta}_2:  \widehat{\bbeta}_2  =  \arg\min_{\bbeta_2} \sum_{i=1}^n \int_{\cal Y} \log \frac{f_1(y|{\bold x}_{i.},\bbeta_1)}{f_2(y|{\bold x}_{i.},\bbeta_2)} f_1(y|{\bold x}_{i.},\bbeta_1)\, dy
\Big\}
$$
is singleton. Since
\begin{eqnarray}
& &  \sum_{i=1}^n \int_{\cal Y} \log \frac{f_1(y|{\bold x}_{i.},\bbeta_1)}{f_2(y|{\bold x}_{i.},\bbeta_2)} f_1(y|{\bold x}_{i.},\bbeta_1)\, dy \nonumber\\
&=&
 \sum_{i=1}^n \int_{\cal Y} \log f_1(y|{\bold x}_{i.},\bbeta_1)\,  f_1(y|{\bold x}_{i.},\bbeta_1)\, dy  -
 \sum_{i=1}^n \int_{\cal Y} \log f_2(y|{\bold x}_{i.},\bbeta_2)\,  f_1(y|{\bold x}_{i.},\bbeta_1)\, dy, \nonumber
\end{eqnarray}
finding
$$
\arg\min_{\bbeta_2}  \sum_{i=1}^n \int_{\cal Y} \log \frac{f_1(y|{\bold x}_{i.},\bbeta_1)}{f_2(y|{\bold x}_{i.},\bbeta_2)} f_1(y|{\bold x}_{i.},\bbeta_1)\, dy
$$
is equivalent to compute $\displaystyle\arg\max_{\bbeta_2}\,{\cal ML}_2(\bbeta_2)$ where $\xi=\frac{1}{n}\sum_i \delta_{{\bold x}_{i.}}$
and thus
$$
{\cal ML}_2(\bbeta_2)=  \sum_{i=1}^n \int_{\cal Y} \frac{1}{n}\log f_2(y|{\bold x}_{i.},\bbeta_2)\, f_1(y|{\bold x}_{i.},\bbeta_1)\, dy.
$$
Assuming that ${\cal ML}_2(\bbeta_2)$ is differentiable with respect to $\bbeta_2$ for any fixed ${\bold x}_{i.}$,
then the maximization of ${\cal ML}_2(\bbeta_2)$ is
performed by setting the partial derivatives of ${\cal ML}_2(\bbeta_2)$ equal to zero.
Let us assume that
\begin{eqnarray}
 \frac{\partial {\cal ML}_2(\bbeta_2)}{\partial \beta_{2j}}
& \propto &
\sum_{i=1}^n \frac{\partial }{\partial \beta_{2j}} \int_{\cal Y}   \log f_2(y|{\bold x}_{i.},\bbeta_2)\, f_1(y|{\bold x}_{i.},\bbeta_1)\, dy \nonumber \\
& = &
 \sum_{i=1}^n \int_{\cal Y}   \frac{\partial }{\partial \beta_{2j}} \log f_2(y|{\bold x}_{i.},\bbeta_2)\, f_1(y|{\bold x}_{i.},\bbeta_1)\, dy .
\label{mean-log-likelihood}
\end{eqnarray}
It is well known that if $f_2(y|{\bold x},\bbeta_2)$ is a GLM, then at ${\bold x}_{i.}$, the $i$-th experimental condition,
$$
\frac{\partial \log f_2(y|{\bold x}_{i.},\bbeta_2) }{\partial \beta_{2j}}= \frac{(y-\mu_i) {x}_{ij}}{{\rm Var}_{i}(Y)} \left(\frac{\partial \mu_i}{\partial \eta_i}\right), \quad j=1,\ldots,d_2.
$$
Therefore, from equation (\ref{mean-log-likelihood})
\begin{eqnarray}
\frac{\partial {\cal ML}_2(\bbeta_2)}{\partial \beta_{2j}}
&\propto &
  \sum_{i=1}^n \int_{\cal Y} \frac{(y-\mu_i) x_{ij}}{{\rm Var}_i(Y)} \left(\frac{\partial \mu_i}{\partial \eta_i}\right) \, f_1(y|{\bold x}_{i.},\bbeta_1)\, dy \nonumber \\
&=&
  \sum_{i=1}^n  \frac{[{\rm E}_{1}(Y|{\bold x}_{i.})-\mu_i] x_{ij}}{{\rm Var}_i(Y)} \left(\frac{\partial \mu_i}{\partial \eta_i}\right)\nonumber \\
&=&
{\bold v}^T({\bold {\bold X}};\bbeta_2) \,{\bold x}_{.j}, \quad j=1\ldots,d_2,
\label{system}
\end{eqnarray}
where ${\bold x}_{.j}$ denotes the $j$-th column of the design matrix $ {\bold X}$ and ${\bold v}({\bold X};\bbeta_2)$ is a $n\times 1$ vector 
whose $i$-th item is $$v_i({\bold x}_{i.};\bbeta_2)=\displaystyle \frac{{\rm E}_{1}(Y|{\bold x}_{i.})-{\rm E}_{2}(Y|{\bold x}_{i.}) }{{\rm Var}_i(Y)} \left(\frac{\partial \mu_i}{\partial \eta_i}\right)$$
and ${\rm E}_{1}(Y|{\bold x}_{i.})$ is the response mean under  the known model $f_1(y|{\bold x}_{i.},\bbeta_1)$.


Setting equal to zero the right-hand side of equations (\ref{system}), we have  a system of $d_2$ non-linear equations whose solution is the maximum likelihood estimator.
For the most commonly used GLMs (for instance when the link function is canonical and for the link functions described in \cite{Wedderburn1976}) this solution exists and is unique if and only if ${\rm Rank}({\bold X})=d_2$ i.e. if and only if the Fisher information matrix is non-singular.



Let us denote by \lq\lq regular GLM'' a GLM for which there exists a unique solution of the system  ${\bold v}^T({\bold X};\bbeta_2) \,{\bold x}_{.j}=0$, $j=1,\ldots,d_2$ if and only if ${\rm Rank}({\bold X})=d_2$. Theorem \ref{theo:conv2} has the following corollary at least for  regular GLMs.

\begin{corollary} Let $f_1(y|{\bold x},\bbeta_1)$ and $f_2(y|{\bold x},\bbeta_2)$ be regular GLMs and let $\xi_0\in  {{\spaceX}} $ have a non-singular Fisher information matrix.
For any $n\ge0$, let $\xi_{n+1}$ be one of the designs obtained by the first order algorithm at  step $(n+1)$, with $\alpha_n$ as in (\ref{alpha}).
Then
$$|I_{2,1}(\xi_n) - \max_{\xi}I_{2,1}(\xi)| \rightarrow 0$$
\noindent as $n\rightarrow \infty$. In particular, if the optimum $\xi^*$ is unique, $\xi_{n}\to\xi^*$.
\end{corollary}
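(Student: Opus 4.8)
The plan is to obtain the Corollary as a direct application of Theorem~\ref{theo:conv2}. That theorem already delivers both displayed limits (and the convergence $\xi_n\to\xi^*$ when the maximizer is unique) as soon as we know that the generated sequence $(\xi_n)_n$ consists of KL-regular designs, so the entire task reduces to verifying this single hypothesis for regular GLMs. I would therefore spend no effort on the convergence itself and concentrate on showing that the first order algorithm, once started from a design with non-singular information matrix, never produces a KL-singular design.

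The first ingredient is the equivalence established in the body of Section~\ref{sect:GLM}: for a regular GLM the score system ${\bold v}^T({\bold X};\bbeta_2)\,{\bold x}_{.j}=0$, $j=1,\dots,d_2$, has a unique solution if and only if ${\rm Rank}({\bold X})=d_2$, i.e.\ if and only if the Fisher information matrix ${\bold J}={\bold X}^T{\bold W}{\bold X}$ is non-singular. Hence the set $\Omega_2(\xi)$ in \eqref{eq:regular} is a singleton exactly when $\xi$ has full column-rank design matrix: the \emph{estimation} notion of regularity and the \emph{discrimination} notion of KL-regularity coincide. It is thus enough to prove that ${\rm Rank}({\bold X}_n)=d_2$ for every $n$, where ${\bold X}_n$ is the design matrix supported on the support of $\xi_n$.

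I would prove this by induction. The base case is the assumption that $\xi_0$ has a non-singular information matrix, i.e.\ ${\rm Rank}({\bold X}_0)=d_2$. For the inductive step, step (b) of the algorithm gives $\xi_{n+1}=(1-\alpha_n)\xi_n+\alpha_n\delta_{{\bold x}_n}$. As long as $\alpha_n\in[0,1)$ one has ${\rm supp}(\xi_{n+1})\supseteq{\rm supp}(\xi_n)$, so ${\bold X}_{n+1}$ is obtained from ${\bold X}_n$ by adjoining at most the row ${\bold x}_n^T$; adding rows cannot lower the rank, whence $d_2={\rm Rank}({\bold X}_n)\le{\rm Rank}({\bold X}_{n+1})\le d_2$ and ${\bold X}_{n+1}$ has full rank. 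By the equivalence above, $\xi_{n+1}$ is KL-regular, closing the induction.

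The main obstacle is the single boundary case $\alpha_n=1$, where $\xi_{n+1}=\delta_{{\bold x}_n}$ collapses to one support point and is KL-singular whenever $d_2\ge2$, so the rank argument breaks down. I would exclude it using the optimal step rule \eqref{alpha}: because $\alpha_n$ maximizes $\alpha\mapsto I_{2,1}((1-\alpha)\xi_n+\alpha\delta_{{\bold x}_n})$, the criterion is non-decreasing along the iterations, $I_{2,1}(\xi_{n+1})\ge I_{2,1}(\xi_n)\ge\dots\ge I_{2,1}(\xi_0)$. On the other hand, for a GLM a one-point design cannot discriminate, $I_{2,1}(\delta_{{\bold x}})=\inf_{\bbeta_2}{\cal I}({\bold x},\bbeta_1,\bbeta_2)=0$, since at a single experimental condition the linear predictor ${\bold x}^T\bbeta_2$ can be tuned to match the mean of $f_1$ and thereby annihilate the Kullback--Leibler divergence. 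Thus, provided $I_{2,1}(\xi_0)>0$ (otherwise the optimal value is $0$ and both limits are trivial), the choice $\alpha_n=1$ would force $I_{2,1}(\xi_{n+1})=0<I_{2,1}(\xi_0)$, contradicting the optimality of $\alpha_n$; therefore $\alpha_n<1$ and the inductive step applies. This is the delicate point, as it is precisely here that the GLM structure, rather than mere concavity of the criterion, must be used. With $(\xi_n)_n$ shown to be KL-regular, Theorem~\ref{theo:conv2} yields $|I_{2,1}(\xi_n)-\max_\xi I_{2,1}(\xi)|\to0$ and, when the maximizer is unique, $\xi_n\to\xi^*$.
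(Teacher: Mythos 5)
Your core argument is the same as the paper's own proof: invoke the equivalence, established in Subsection~\ref{sect:GLM} for regular GLMs, between KL-regularity in the sense of \eqref{eq:regular} and non-singularity of the Fisher information matrix; observe that regularity propagates along the iterations because for $\alpha_n<1$ the support of $\xi_{n+1}=(1-\alpha_n)\xi_n+\alpha_n\delta_{{\bold x}_n}$ contains that of $\xi_n$, so the design matrix only gains rows and full rank is inherited; then apply Theorem~\ref{theo:conv2}. The paper's proof is exactly this, stated even more tersely (it asserts ``if $\xi_n$ is regular then $\xi_{n+1}$ is also regular'' without discussing the step length at all).

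The step you add to rule out $\alpha_n=1$, however, relies on a false claim, namely that $I_{2,1}(\delta_{\bold x})=\inf_{\bbeta_2}{\cal I}({\bold x},\bbeta_1,\bbeta_2)=0$ for every GLM and every ${\bold x}$. This presupposes both that the regressor vector at ${\bold x}$ is non-zero (so the linear predictor can actually be tuned) and that matching means forces the densities to coincide (true only when the two rival models belong to the same family). The paper's own Subsection~\ref{ex:6.2} is a counterexample inside the scope of the Corollary: both models are logistic (canonical link, hence regular GLMs), $f_2$ has no intercept, so at $x=0$ its linear predictor is frozen at $0$ while $\eta_1=\beta_{01}$; thus ${\cal I}(0,\bbeta_1,\bbeta_2)$ is, for every $\bbeta_2$, the strictly positive constant equal to the Kullback-Leibler divergence between Bernoulli$(F(\beta_{01}))$ and Bernoulli$(1/2)$, and indeed $\delta_0$ is the KL-optimum there. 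Worse, your contradiction argument collapses precisely when the edge case occurs: if $\delta_{{\bold x}_n}$ has criterion value at least that of $\xi_n$ (for instance because it is optimal), concavity of $I_{2,1}$ makes $\alpha\mapsto I_{2,1}\left((1-\alpha)\xi_n+\alpha\delta_{{\bold x}_n}\right)$ non-decreasing on $[0,1]$, so $\alpha_n=1$ is an admissible, possibly unique, choice in \eqref{alpha} --- and the paper reports that in that example the algorithm does collapse onto the singular design $\delta_0$ and then loops. Your fallback (``if $I_{2,1}(\xi_0)=0$ the optimal value is $0$ and the limits are trivial'') is also incorrect: in the benchmark of Subsection~\ref{sub:bench}, any three-point design with distinct points has non-singular information matrix and $I_{2,1}=0$, while the optimal value is positive. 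So the $\alpha_n=1$ collapse onto a singular design is a genuine phenomenon that your monotonicity argument cannot exclude; the paper's proof is silent on it (the issue is only addressed pragmatically, via the regularization of Subsection~\ref{ex:6.2}), and your added step, as written, is not a valid repair of that silence.
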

\begin{proof}
In the classes of the GLMs, any KL-regular design has non-singular Fisher information matrix, and any design with non-singular Fisher information matrix is KL-regular;
therefore if $\xi_{n}$ is regular then $\xi_{n+1}$ is also regular. For these models, it is then guaranteed that the algorithm moves on regular designs if it starts from a regular design $\xi_0$.
\end{proof}

\section{Examples and computational aspects} \label{sect:ex}
In this section we show some practical implications of the theoretical results proved in Sections \ref{sect:invcont} and \ref{sect:alg}.

For what concerns the invariance property, the practitioner can observe that the algorithm performs differently in different experimental regions. 
If it is numerically difficult to compute the KL-optimum design in a specified ${\cal X}$, the experimenter
can take advantage of the invariance property.
He can compute the KL-optimum design in a suitable transformed space ${\cal Z}$ (where less computational effort is required) and then find the KL-optimum design in ${\cal X}$ through the invariance property.
In the example provided in Subsection \ref{sub:bench} the algorithm needs more iterations than in a transformed space.

Continuity of KL-criterion is necessary to assure that the approximated solution obtained with the stopping rule is close to the KL-optimum design. Furthermore, it also guarantees that the solution obtained through the regularization procedure is a ``nearly'' KL-optimum design. Subsection \ref{ex:6.2} gives an example where the problem is regularized since the KL-optimum design is singular.

In Subsection \ref{sub:comp} some practical hints are proposed, in comparison with the suggestions provided in \cite{Fedorov1972}.
\subsection{A benchmark test}\label{sub:bench}
In this example we develop a benchmark test for the convergence of the algorithm, where the analytical result for the KL-optimum design is obtained through the Chebyshev polynomials.

We take ${x}\in {\cal X}=[-1;1]$ and we test a polynomial of degree three versus a polynomial of degree two. In other words,  $f_1(y|{\bold x},\bbeta_1)$ and $f_2(y|{\bold x},\bbeta_2)$ are the probability densities of two Gaussian random variables with means $\beta_{01}+\beta_{11}{ x}+\beta_{21}{ x}^2+\beta_{31}{ x}^3$ and
$\beta_{02}+\beta_{12}{x}+\beta_{22}{ x}^2$, respectively and common variance $\sigma^2$. We assume the polynomial of degree 3 is completly known. Except for the constant of proportionality $\beta_{31}$, given a design $\xi$, the Kullback-Leibler divergence between these two Gaussian models is
${\cal I}({x},\btheta)=\left(\theta_0+\theta_1 { x}+\theta_2 { x}^2+{ x}^3 \right)^2 $,
where $\theta_i=\frac{\beta_{i1}-\beta_{i2}}{\beta_{31}}$ and $i=0,1,2$.
Let \[
{\cal J}(\xi,{\btheta})=\int_{\cal X} \left(\theta_0+\theta_1 {x}+\theta_2 { x}^2+{ x}^3 \right)^2 \xi(dx),
\] then, except for a constant of proportionality, the KL-criterion is
\[I_{2,1}(\xi)=\inf_{{\btheta}} {\cal J}(\xi,{\btheta}).
\]
In  order to solve the previous optimization problem we set the partial derivatives of ${\cal J}(\xi,\btheta)$ equal to zero, i.e.
\begin{equation}
\frac{\partial{\cal J}(\xi,{\btheta})}{\partial\theta_i}=2 \int_{\cal X} {x}^i \left( \theta_0+\theta_1 { x}+\theta_2 { x}^2+{x}^3  \right) \xi(dx)=0,\quad i=0,1,2.
\label{hat-theta}
\end{equation}
The Chebyshev polynomial of degree three, $T_3({ x})=4x^3-3x$, fulfills the following orthogonality conditions,
\[
 \int_{\cal X} { x}^i\, T_3({x}) \,\tilde\xi(dx)=0,\quad i=0,1,2,
\]
where
$\tilde{\xi}=\left\{ \begin{array}{cccc}
-1 & -1/2 & 1/2 & 1 \\
1/6 & 1/3 & 1/3 & 1/6
\end{array}  \right\}$. Let us note that the design points of $\tilde{\xi}$ are the singular points of $T_3({ x})$ in ${\cal X}=[-1,1]$.

Therefore, if we denote  the solutions of Equation (\ref{hat-theta}) by $\hat{\btheta}=(\hat\theta_0,\hat\theta_1,\hat\theta_2)^T$,
we have that $\hat\theta_0=\hat\theta_2=0$ and $\hat\theta_1=-3/4$.

Let us prove that $\tilde{\xi}$  is actually a KL-optimum design by checking the equivalence theorem inequality (see \cite{Tomm2007}, for more details).
Since $T_3({x})$ is a Shabat polynomial (i.e. it assumes the same minimum and maximum values except for the sign) we have that the directional derivative of $I_{2,1}(\xi)$ at $\tilde{\xi}$ in the direction of $\xi_x-\tilde\xi$ is
$$
\psi({ x};\tilde{\xi})= {\cal I}({ x},\hat{\btheta})
 -\int_{\cal X}{\cal I}({x},\hat{\btheta})\,\tilde\xi(dx)  =\left( -\frac{3}{4}{ x} +{ x}^3 \right)^2-\max_x \left( -\frac{3}{4}{ x} +{ x}^3 \right)^2\leq 0
$$
with equality at the singular points -1, -1/2, 1/2 and 1. This proves that $\tilde{\xi}$ is a KL-optimum design.
The same result has been shown in a different way by \cite{Spruill} and extended by \cite{Dette}.

 We have applied the first order algorithm described in Section \ref{sect:alg} using Matlab 7.6 optimization toolbox, obtaining the following results. The iterative procedure converges to the KL-optimum design $\tilde{\xi}$ after $107$ iterations with $\delta = 0.99$, and after $384$ iterations
with $\delta=0.995$.

We have then tested the invariance property in this benchmark setting.
If the experimental condition is $z=2+4 { x} \in{\cal Z}=[-2;6]$ instead of ${x}\in{\cal X}$, then 
after $430$ iteration with $\delta=0.95$ the KL-optimum design
${\eta}^*=\left\{ \begin{array}{cccc}
-2 & 0 & 4 & 6 \\
1/6 & 1/3 & 1/3 & 1/6
\end{array}  \right\}$ is reached. This solution 
is consistent with the invariance property given in Theorem \ref{theo:inv}.

Note that in the transformed experimental region the optimum is reached with a larger number of iterations.

\subsection{Discrimination between two logistic regression models}
\label{ex:6.2}

The purpose of this Subsection is to handle an example where the optimal design is singular.
The experimental conditions are assumed to vary in
 the interval ${\cal X}=[0,1]$ and logistic regression models are considered
 instead of linear regression models. In other words,
  $Y$ is  a binary response variable such that
 $$P(Y=1|{ x};\bbeta_i)=F(\eta_i)=\frac{e^{\eta_i}}{1+e^{\eta_i}},\quad i=1,2$$ where
$\eta_1 = \beta_{01}+\beta_{11} {x}+\beta_{21} { x}^2$ and $\eta_2 = \beta_{12} { x}+\beta_{22} { x}^2$
are two rival models for the expected response.
For these models the Kullback-Leibler divergence is
\[
{\cal I}({ x},\bbeta_1,\bbeta_2)=(\eta_1-\eta_2)\frac{\exp(\eta_1)}{1+\exp(\eta_1)}+\log \frac{1+\exp(\eta_2)}{1+\exp(\eta_1)}.
\]
The algorithm reaches in few steps the one point design which concentrates the whole mass at zero, i.e. $\xi^*=\delta_0$. Once the algorithm has reached this design, it fails in capturing the whole $\Omega_2(\xi^\ast)$, and it enters in a loop.

%
It is well known that a design $\xi^*$ is optimum if and only if  $\psi({\bold x},\xi^*)\leq 0$ for any $ {\bold x}$. However, $\psi({\bold x},\xi)$  can be computed only for regular designs.
When the optimum design is singular, in order to check its optimality we can regularize the problem (see \cite{AMT_Moda10}), that is to use the function
\begin{equation}\label{eq:reg1}
I_\gamma(\xi)=I_{2,1}[(1-\gamma)\xi+\gamma \tilde\xi]
\end{equation}
instead of $I_{2,1}(\xi)$, where $0<\gamma<1$ and  $\tilde\xi$ is a regular design, that is a design with a non singular information matrix.
Let  $\xi_1=(1-\gamma)\xi+\gamma \tilde\xi$; then,
$I_\gamma(\xi)=I_{21}(\xi_1)$.
The first order algorithm may be specialized for $I_\gamma(\xi)$ using $\xi_{1n}=(1-\gamma)\xi_n+\gamma \tilde\xi$ instead of $\xi_n$ at the step (a1) of the algorithm described in Section \ref{sect:alg}.
The stopping rule recalled in Section \ref{sect:alg} may also be specialized for $I_\gamma(\xi)$. \\
Let  $\bar\xi_1=(1-\gamma)\bar\xi+\gamma\tilde\xi$; then, the
directional derivative of $I_\gamma(\xi)$  in the direction of $\bar\xi-\xi$ is
\begin{equation}
\partial I_\gamma(\xi;\bar\xi)=\partial I_{21}(\xi_1;\bar\xi_1)= \int_{\cal X}  \!  \psi({\bold x};\xi_1) \bar\xi_1(d{\bold x}),
\label{der-dir1}
\end{equation}
where the last equality is proved by \cite{Tomm2007}.
From  (\ref{der-dir1}) and (\ref{derivata}), it follows that
\begin{eqnarray}\label{eq:reg3}
\partial I_\gamma(\xi;\bar\xi)&=&\!
\int_{\cal X}  \! \left[
 {\cal I}(f_1,f_2,{\bold x},\hat{\bbeta}_2)  -\int_{\cal X}{\cal I}(f_1,f_2,{\bold s},\hat{\bbeta}_2)\,\xi_1(d{\bold s}) \right]
\bar\xi_1(d{\bold x})\nonumber\\
&=&
\!\int  (1-\gamma) \psi({ {\bold x}};\xi)  \bar\xi(d{\bold x}),
\label{der-dir2}
\end{eqnarray}
where $\psi({{\bold x}};\xi)= \left[  {\cal I}(f_1,f_2,{\bold x},\hat{\bbeta}_2)  -\int_{\cal X}{\cal I}(f_1,f_2,{\bold s},\hat{\bbeta}_2)\,\xi(d{\bold s})   \right]$ with
\begin{equation}
\hat \bbeta_2= \bbeta_2(\xi_1)=\arg
\min_{\bbeta_2 \in \Omega_2} \int_{\cal X} \,{\cal I}(f_1,f_2,{\bold x},\bbeta_2) \, \xi_1(d{\bold x}).
\label{teta2}
\end{equation}
instead of $\hat \bbeta_2= \bbeta_2(\xi)$.

Since $I_\gamma(\xi)$ is a concave function, the iterative procedure based on $I_\gamma(\xi)$   stops at the step $n$ if
\begin{equation}
\left[1+\frac{\max_{{\bold x}\in {\cal X}} (1-\gamma)  \psi({\bold x};\xi_n)}{I_{\gamma}(\xi_n)} \right]^{-1}>\delta,
\label{eq8}
\end{equation}
where $0<\delta<1$ is a suitable value chosen by the experimenter, e.g. $\delta=0.995$.\\[.3cm]
It is well known that in maxmin problems it may be difficult to check the optimality of a design because the
expression of the directional derivative is not easy to be found when a design is not regular (see, for instance,  \cite[Theorem~2.6.1]{FedorovHackl}). More specifically, in order to check the directional derivative it is necessary introduce a measure on $\bbeta_2$, which is not easy to be found. Differently, since $I_\gamma(\xi)=I_{21}(\xi_1)$ and
$\xi_1$ is a  regular design by definition, equation (\ref{der-dir2}) provides the directional derivative of
$I_\gamma(\xi)$ at any design $\xi$, without assuming the regularity of $\xi$.
Therefore, if the above described algorithm stops at the $n$-th step, it is always possible to check the optimality of $\xi_n$ through the following directional derivative of $I_\gamma(\xi)$ at $\xi_n$ in the direction of $\delta_{\bold x}-\xi_n$,
\begin{equation}
\psi_\gamma({\bold x};\xi_n)=(1-\gamma)\psi({\bold x};\xi_n).
\label{der-dir2bis}
\end{equation}
If   $\psi_\gamma({\bold x};\xi_n)\leq 0$ for any ${\bold x}$ and the equality is reached at the design points of $\xi_n$, then $\xi_n$ is an optimal design.

Applying the first order algorithm to $I_\gamma (\xi)$, after only two iterations we have that $\xi_{n}^*=\delta_0$, as shown by Figure \ref{fig-der} which displays the function
$\psi_\gamma({x};\xi_n^*)$.
\begin{figure}[!!ht]
\begin{center}
\includegraphics[width=8cm,height=4.5cm]{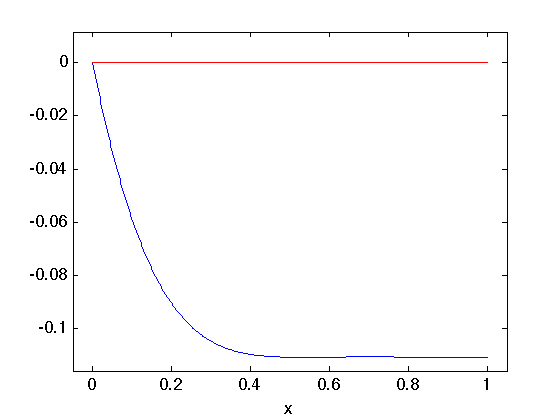}
\caption{Directional derivative $\psi_\gamma({x};\delta_0)$}
\end{center}
\label{fig-der}
\end{figure}
As proved in \cite{AMT_Moda10}, from the continuity of the KL-criterion $\xi_{n}^*$ is a ``nearly''
KL-optimal design  and, in particular, the smaller $\gamma$  the better this design. Note that different regular designs $\tilde \xi$ lead to different expressions of $\psi_\gamma({\bold x};\xi_n^*)$ but to the same ``nearly'' KL-optimum design $\xi_n^*=\delta_0$.

\subsection{Computational aspects}\label{sub:comp}
In this section we suggest some tricks to get numerically a KL-optimum design through the first order algorithm.
We also compare these tricks to \cite[pag.~109ff.]{Fedorov1972}, underlying possible improvements.
\begin{enumerate}
\item
When $\xi_n$ is singular, the directional derivative of the KL-criterion function at $\xi_n$
in the direction of $\delta_{\bold x}-\xi_n$
 cannot be expressed as in \eqref{derivata} with $\xi=\xi_n$ (for more details see \cite[pag.~41]{FedorovHackl}).
 Therefore, it cannot be used in the algorithm.
On the other hand, the directional derivative of the regularized criterion \eqref{eq:reg1} at $\xi_n$,
in the direction of $\delta_{\bold x}-\xi_n$, is given by \eqref{der-dir2bis}, whether or not $\xi_n$ is regular.
Therefore, before each step (a1), we suggest to decide between the original KL-criterion or the regularized one according to the following rule:
use the regularized criterion if $\xi_n$ has less than $d_2$ support points (see the discussion in Section~\ref{sect:GLM}); otherwise, use the original KL-criterion.
\item
At step {(a1)} of the first order algorithm described in Section \ref{sect:alg}, in order to find $\bbeta_{2,n}$ a starting point  $\bbeta_{2,n}^{(0)}$ is necessary.
We suggest to use a point close to $\bbeta_{2,n-1}$, for instance $\bbeta_{2,n}^{(0)}=\bbeta_{2,n-1}+{\boldmath \varepsilon}$, where ${\boldmath\varepsilon}$ is a random error with the magnitude of $\bbeta_{2,n-1}$.
Sometimes, the reasonable choice of $\bbeta_{2,n}^{(0)}=\bbeta_{2,n-1}$ led the algorithm to a situation of stagnation around a local maximum, for this reason a random error is also considered.
\item
As in \cite{Fedorov1972}, at step {(a2)}, in order to choose a starting point ${\bold x}_n^{(0)}$ we suggest to take the maximum point in a regular grid or in a set of randomly selected starting points of  ${\cal X}$, as ${\cal X}$ is a compact space. In our benchmarks, we did not find significant differences between these two choices.
\item
We suggest to check the terminal condition $U(\xi)>\delta$ after step {(a2)}, where the point ${\bold x}_n$ is computed, which gives the direction of the maximum increasing $\delta_{{\bold x}_n}-\xi_{n-1}$.
Note that if  $\xi_n$ is a singular design, even though it has at least $d_2$ support points, then it is not computationally guaranteed that $\delta_{{\bold x}_n}-\xi_{n-1}$ is an increasing direction. In that case, the value $\alpha_n$ computed at step {(b)} is zero, and thus we suggest to regularize the criterion function, as described in Example \ref{ex:6.2}.
\item
At step {(b)} we construct a new design $\xi_{n+1}$ adding the new point ${\bold x}_n$ with weight $\alpha_n$.
Before going back to step {(a1)}, we suggest
\begin{enumerate}
\item[(i)]
as in \cite{Fedorov1972}, to collapse the points ${\bold x}_s\in B({\bold x}_n,r_n)$ (where $r_n$ is a radius tending to zero) in a new point $\bar {\bold x}_n$ with weight given by the sum of the weights of ${\bold x}_s$'s and ${\bold x}_n$.
To speed up the algorithm we suggest to increase the  weight of ${\bold x}_n$ in the computation of the barycenter
$\bar {\bold x}_n$, as $n$ becomes larger (we found that a good choice is to increase the weight of ${\bold x}_n$ by a factor of magnitude $n^{0.8}$, while the radius decreases as $n^{-0.65}$);
\item[(ii)]
as in \cite{Fedorov1972}, to remove the support points with low weights (in absolute value as in \cite{Fedorov1972}, and also when the weight is low compared with the mean weight of the others).
\end{enumerate}
\end{enumerate}

\section{Conclusions}\label{sect:end}

In this paper we have provided some nice properties for the KL-optimality criterion, which are useful from a practical point of view.
For instance, since the KL-optimum design is invariant to a scale-position transformation of the design region, if the experimental domain changes from ${\cal X}$ to ${\cal Z}=\{{\bold z}: {\bold z}={\bold a}+{\bold B}{\bold x}|{\bold x}\in{\cal X}\}$, in order to compute a new KL-optimum design on ${\cal Z}$ it is enough to  change in the same way the support points of the KL-optimum design on ${\cal X}$.

We have proved a crucial theoretical property, that is, the KL-criterion is continuous with respect to the design $\xi$. Continuity also guarantees the computational stability, since when designs are obtained in practice they are approximated at each step.  Moreover it is a key property to prove the convergence of the first order algorithm.

We have showed that, at least in a large class of models, if the first order algorithm starts from an initial design with non-singular information matrix, then it moves on regular designs and thus it converges to the set of KL-optimum designs.
If the KL-optimum design is singular then, in order to check its optimality, a regularized KL-criterion can be used: 
from the continuity of the KL-criterion, an optimum design for the regularized  criterion is also nearly optimum for the KL-criterion.

Let us note that Kullback (see \cite{Kullback}) has already investigated some invariance properties of the Kullback-Leibler divergence and some relations with the Fisher information matrix. The KL-criterion, however, is the minimum Kullback-Leibler divergence between two parametric families. Therefore, the invariance property and the connection with the Fisher information matrix herein proved, are original results.

In this work, we also provide a list of useful advices to implement the first order algorithm, in order to get numerically a KL-optimum design. Some interesting examples show the application of the  algorithm.

The KL-criterion depends on the unknown model parameters. For this reason, KL-optimum designs computed in this work are only locally optimal. In \cite{Dette2013} it is showed that T-optimal designs (which are also KL-optimal) are sensitive with respect to parameter misspecification. To construct efficient discriminating designs which are instead robust to parameter misspecification, \cite{TommasiFidalgo} propose a Bayesian version of the KL-criterion, while \cite{TommasiMay} follow a sequential approach. Another possibility to solve the problem of parameter dependence could be to use the following standardized maxmin KL-optimality criterion:
$$I_{2,1}^M(\xi)=\inf_{\bbeta_1 \in \Theta_1}\dfrac{I_{2,1}(\xi;\bbeta_1)}{I_{2,1}(\xi^*;\bbeta_1)}.$$

Another drawback of the KL-criterion is that the Kullback-Leibler divergence is not simmetric and thus,
when the  rival models are not nested, two different KL-criteria may be defined. This problem is solved in \cite{TommasiFidalgo} by using a prior distribution for the models. Another solution could be to use
the Jeffreys divergence instead of the Kullback-Leibler one, as done in \cite{DragFedorov}.

As a matter of future work we intend to develop a new optimality criterion based on the Jeffreys divergence as well as to apply the standardized maxmin approach to compute robust discriminating design.

\bigskip

\begin{footnotesize}
\noindent\textbf{Acknowledgements.}
We are grateful to Professor Valerii Fedorov who suggested us some relevant references and to two anonymous referees for their useful remarks. We thank also Professor Giancarlo Manzi for helping us in improving the original draft of the paper.
\end{footnotesize}

\end{document}